\newcommand{\defemph}[1]{\textbf{\emph{#1}}}
\newcommand{\setsep}{\; | \;}
\newcommand{\nhalf}{\lfloor \frac{n}{2} \rfloor}
\newcommand{\tuple}[1]{{\langle#1\rangle}}
\newcommand{\gennetwork}{\tuple{L_1, \ldots, L_d}}
\newcommand{\netoutput}{\mathop{outputs}}
\newcommand{\depth}{\mathop{depth}}
\newcommand{\once}{\mathop{once}}
\newcommand{\valid}{\mathop{valid}}
\newcommand{\sorts}{\mathop{sorts}}
\newcommand{\updatechannel}{\mathop{update}}
\newcommand{\used}{\mathop{used}}
\newcommand{\okienka}{\mathop{subnets}}
\newcommand{\okienkaoverlap}{\mathop{overlapsubnets}}
\newcommand{\netoverlap}{\mathop{overlap}}
\newcommand{\bb}{\mathbb{B}}
\newcommand{\zz}{\mathbb{Z}}
\newcommand{\nop}[1]{}
\newcounter{sncolumncounter}
\newcounter{snrowcounter}
\def \nodelabel#1{%
\setcounter{snrowcounter}{1}
 \foreach \i in {#1}{%
   \draw (\value{sncolumncounter},\value{snrowcounter}) node[anchor=south]{\i};
   \addtocounter{snrowcounter}{1}
 }
\addtocounter{snrowcounter}{-1}
 \addtocounter{sncolumncounter}{1}
}
\def \addcomparator#1#2{%
    \draw (\value{sncolumncounter},#1) node[circle,fill=black,minimum size=4pt,inner sep=0pt,outer sep=0pt]{}--(\value{sncolumncounter},#2) node[circle,fill=black,minimum size=4pt,inner sep=0pt,outer sep=0pt]{};
}
\def \addlayer{%
  \addtocounter{sncolumncounter}{1}
}
\def \nextlayer{%
  \draw [dashed] (\value{sncolumncounter}+0.5,0.6)--(\value{sncolumncounter}+0.5,\value{snrowcounter}+1);
  \addlayer{}
}
\newenvironment{sortingnetwork}[2]
{
  \setcounter{sncolumncounter}{0}
  \setcounter{snrowcounter}{#1}
  \def \sn@fullsize{15}
  \begin{tikzpicture}[scale=#2*0.7]
}
{
  \foreach \i in {1, ..., \value{snrowcounter}}
  {
    \draw (-0.5,\i)--(\value{sncolumncounter}+0.5,\i);
  }
  \end{tikzpicture}
}
\begin{document}

\title{Optimal Sorting Networks}
\author{
Daniel Bundala \and Jakub Z\'{a}vodn\'{y}\\
\institute{Department of Computer Science, University of Oxford \\ Wolfson Building, Parks Road, Oxford, OX1 3QD, UK}
\email{$\{$daniel.bundala,jakub.zavodny$\}$@cs.ox.ac.uk}
}
\authorrunning{D.~Bundala \and J.~Z\'{a}vodn\'{y}}
\toctitle{Optimal Sorting Networks}
\tocauthor{Daniel~Bundala and Jakub~Z\'{a}vodn\'{y}}
\maketitle
\setcounter{footnote}{0}

\begin{abstract}
This paper settles the optimality of sorting networks given in The Art of Computer Programming vol. 3 more than 40 years ago. The book lists efficient sorting networks with $n \leq 16$ inputs. In this paper we give general combinatorial arguments showing that if a sorting network with a given depth exists then there exists one with a special form. We then construct propositional formulas whose satisfiability is necessary for the existence of such a network. Using a SAT solver we conclude that the listed networks have optimal depth. For $n \leq 10$ inputs where optimality was known previously, our algorithm is four orders of magnitude faster than those in prior work.
\end{abstract}

\section{Introduction}

In their celebrated result, Ajtai, Koml\'{o}s and Szemer\'{e}di (AKS)~\cite{AKS}, gave an optimal oblivious sorting algorithm with $O(n \log n)$ comparisons in $O(\log n)$ parallel steps. An oblivious sorting algorithm is one in which the order of comparisons is fixed and depends only on the number of inputs but not their values. Compare this with standard algorithms such as MergeSort or QuickSort where the order of comparisons crucially depends on the input values.

A popular model of oblivious sorting algorithms are so-called sorting networks, which specify a sequence of swap-comparisons on a set of inputs, and whose depth models the number of parallel steps required. Even though the AKS network has asymptotically optimal depth, it is infamous for the large constant hidden in the big $O$ bound; recursively constructed networks of depth $O(\log^2n)$~\cite{Batcher} prove superior to the AKS network for all practical values of $n$. Small networks for small numbers of inputs serve as base cases for these recursive methods. However, constructing networks of optimal depth has proved extremely difficult (e.g.,~\cite{Germans,Parberry}) and is an open problem even for very small number of inputs. We address this problem in this paper.

Already in the fifties and sixties various constructions appeared for small sorting networks on few inputs. In 1973 in The Art of Computer Programming vol.~3~\cite{Knuth}, Knuth listed the best sorting networks with $n \leq 16$ inputs known at the time. It was further shown in~\cite{BoseNelson} that these networks have optimal depth for $n \leq 8$. No progress had been made on the problem until 1989 when Parberry~\cite{Parberry} showed that the networks listed in \cite{Knuth} are optimal for $n=9$ and $n=10$. The result was obtained by implementing an exhaustive search with pruning based on symmetries in the first two parallel steps in the sorting networks, and executing the algorithm on a supercomputer (Cray-2). Despite the great increase in available computational power in the 24 years since, the algorithm would still not be able to handle the case $n=11$. Recently there were attempts~\cite{Germans} at solving the case $n=11$ but we are not aware of any successful one.

Forty years after the publication of the list of small sorting networks by Knuth~\cite{Knuth}, we finally settle their optimality for the remaining cases $n=11$ up to and including $16$. We give general combinatorial arguments showing that if a small-depth sorting network exists then there exists one with a special form. We then construct propositional formulas whose satisfiability is necessary for the existence of such a network. By checking the satisfiability of the formulas using a SAT solver we conclude that no smaller networks than those listed exist.

We obtained all our results using an off-the-shelf SAT solver running on a standard desktop computer. It is noteworthy that our algorithm required less than a second to prove the optimality of networks with $n \leq 10$ inputs whereas the algorithm in~\cite{Parberry} was estimated to take hundreds of hours on a supercomputer and that in~\cite{Germans} took more than three weeks on a desktop computer.

\section{Sorting Networks}
\label{sec:sorting-networks}

A \defemph{comparator network} $C$ \nop{(see Fig.~\ref{fig:network-example}) }with $n$ channels and depth $d$ is defined as a tuple $C = \tuple{L_1, \ldots, L_d}$ of \defemph{layers} $L_1, \ldots, L_d$. Each layer consists of \defemph{comparators} $\tuple{i, j}$ for pairs of channels $i < j$. Every channel $i$ is required to occur at most once in each layer $L_k$, i.e., $|\{j \setsep \tuple{i,j} \in L_k \vee \tuple{j,i} \in L_k \}| \leq 1$. A layer $L$ is called \defemph{maximal} if no more comparators can be added into $L$, i.e., $|L| = \nhalf$.

An input to a comparator network is a sequence of numbers applied to channels in the first layer. The numbers are propagated through the network; each comparator $\tuple{i, j}$ takes the values from channels $i$ and $j$ and outputs the smaller value on channel $i$ and the larger value on channel $j$. For an input sequence $x_1, \ldots, x_n$ define the value $V(k,i)$ of channel $1 \leq i \leq n$ at layer $k = 0$ (input) to be $V(0,i) = x_i$ and at layer $1 \leq k \leq d$ to be:
\[
V(k, i) = \left\{\begin{array}{ll}
\min(V(k-1, i), V(k-1,j)) \quad & \textrm{if $\tuple{i,j} \in L_k$,} \\
\max(V(k-1, i), V(k-1,j)) & \textrm{if $\tuple{j,i} \in L_k$,} \\
V(k-1,i) & \textrm{otherwise.} \\
\end{array} \right.
\]
The \defemph{output} $C(x)$ of $C$ on $x$ is the sequence $\tuple{V(d, 1), V(d,2), \ldots, V(d, n)}$. See Fig.~\ref{fig:network-example} for an example of a network and its evaluation on an input.

Each comparator permutes the values on two channels and hence the output of a comparator network is always a permutation of the input. A comparator network is called a \defemph{sorting network} if the output $C(x)$ is sorted (ascendingly) for every possible input $x \in \zz^n$. We denote the set of all sorting networks with $n$ channels and depth $d$ by $\bm{S(n,d)}$.

\begin{figure}
\centering
\begin{tabular}{cc}

\begin{sortingnetwork}{4}{0.7}
\nodelabel{0,1,0,1}
\addcomparator{1}{2}
\addcomparator{3}{4}
\nextlayer
\nodelabel{0,1,0,1}
\addcomparator{1}{3}
\addlayer
\addcomparator{2}{4}
\nextlayer
\nodelabel{0,1,0,1}
\addcomparator{2}{3}
\nextlayer
\nodelabel{0,0,1,1}
\end{sortingnetwork}

&

\begin{sortingnetwork}{4}{0.7}
\nodelabel{7,5,0,2}
\addcomparator{1}{2}
\addcomparator{3}{4}
\nextlayer
\nodelabel{5,7,0,2}
\addcomparator{1}{3}
\addlayer
\addcomparator{2}{4}
\nextlayer
\nodelabel{0,2,5,7}
\addcomparator{2}{3}
\nextlayer
\nodelabel{0,2,5,7}
\end{sortingnetwork}
\\
(a) & (b) 
\end{tabular}
\caption{A comparator network ($L_1 = \{\tuple{1,2},\tuple{3,4}\},L_2=\{\tuple{1,3},\tuple{2,4}\},L_3=\{\tuple{2,3}\}$) with $4$ channels, $5$ comparators, and depth $3$. The channels go from left to right, the first channel is at the bottom, the dashed lines separate the layers. The network on the left is evaluated on the input $\tuple{0,1,0,1}$ and the network on the right on $\tuple{7,5,0,2}$. Diagram shows the values on channels after each layer.}
\label{fig:network-example}
\end{figure}

\nop{A way to measure the efficiency of a sorting network is to count the number of comparators used. However, note that all comparators in a single layer can be evaluated independently in parallel and so the depth of a sorting network is a more appropriate measure of the number of (parallel) computation steps required to sort the input.}

In this work, we are interested in finding the optimal-depth sorting networks for small values of $n$. That is, given $n$, what is the least value of $d$, denoted by $\bm{V(n)}$, such that $S(n,d)$ is nonempty?

Observe that the function $V(n)$ is non-decreasing. Let $C$ be a sorting network with $n$ channels, and construct a network $D$ from $C$ by removing the last channel and all comparators attached to it. Then $D$ is a sorting network with $n-1$ channels: its behaviour on any input is simulated by the first $n-1$ channels of~$C$ if the input to the last channel is set larger than all other inputs ($C(x\infty)$ is $D(x)\infty$, and $C(x\infty)$ is sorted so $D(x)$ is also sorted).

\subsection{Known Bounds on $V(n)$}

Fig.~\ref{fig:previousoptimal} summarises the best bounds on $V(n)$ for $n \leq 16$ channels known before our work. See~\cite{Parberry} for lower bounds on $V(9)$ and $V(10)$, all other numbers appeared already in~\cite{Knuth}. The main contribution of this paper is that $S(11,7)$ and $S(13,8)$ are empty. Thus we improve the lower bounds for $n=11,12$ and $13 \leq n \leq 16$\nop{$n=13,14,15,16$} to $8$ and $9$, respectively, thereby matching the respective upper bounds.

\begin{figure}
\centering
\begin{tabular*}{.85\linewidth}{@{\extracolsep{\fill}}|l|c|c|c|c|c|c|c|c|c|c|c|c|c|c|c|c|}
\hline
$n$ & 1 & 2 & 3 & 4 & 5 & 6 & 7 & 8 & 9 & 10 & 11 & 12 & 13 & 14 & 15 & 16 \\
\hline
Upper bound & 0 & 1 & 3 & 3 & 5 & 5 & 6 & 6 & 7 & 7 & 8 & 8 & 9 & 9 & 9 & 9 \\
\hline
Lower bound & 0 & 1 & 3 & 3 & 5 & 5 & 6 & 6 & 7 & 7 & 7* & 7* & 7* & 7* & 7* & 7* \\
\hline
\end{tabular*}
\caption{Table summarising the best lower and upper bounds known before our work. We improve the starred lower bounds to match the corresponding upper bound.}
\label{fig:previousoptimal}
\end{figure}

One can think of a layer of a comparator network as a matching on $n$ elements: a comparator joins two distinct elements. The number of matchings on $n$ elements grows exponentially in $n$. (See Fig.~\ref{fig:nolayers} for values for $n \leq 13$.) In particular, there are $35696$ matchings on $11$ elements, so to establish the lower bound $V(11) \geq 8$ we have to show that none of the $35696 ^7 \geq 10^{31}$ comparator networks with $11$~channels and depth $7$ is a sorting one. Similarly, to establish $V(13) \geq 9$ we have to consider $568504 ^ 8 \geq 10^{46}$ candidate networks. These numbers immediately make any exhaustive search approach infeasible. In the next section we present techniques to reduce the search space of possible sorting networks, and in Section~\ref{section:sat} we show how to explore this space using a SAT solver.

\section{Search Space Reduction}
\label{sec:reduction}

In the previous section we showed that the number of comparator networks grows very quickly.\nop{We observed in the previous section that there are more than $10^{46}$ candidate networks of $13$ channels and depth 8.} In this section we study general properties of sorting networks with arbitrary numbers of channels and depth, and we show that if $S(n,d)$ is non-empty then it contains a sorting network of a particular form, thus restricting the set of possible candidate networks. For example, for $n=13$ this restricts the set of $568504^2 \geq 3 \cdot 10^{11}$ possible first-two layers to only 212 candidates.

Our arguments build upon and extend those from~\cite{Parberry}, and are based on four technical lemmas given in the following subsections. We make use of the following notation. The set of all layers on $n$ channels is denoted as $\defemph{G}_{\defemph{n}}$. For two networks $C = \tuple{L_1, \ldots, L_p}$ and $D = \tuple{M_1, \ldots, M_q}$ with the same number of channels, the \defemph{composition} $C\fatsemi D$ of $C$ and $D$ is the network $\tuple{L_1, \ldots, L_p, M_1, \ldots, M_q}$. That is, we first apply $C$ and then $D$; for any input $x \in \bb^n$ we have $(C\fatsemi D)(x) = D(C(x))$. A \defemph{prefix} of a network $C$ is a network $P$ such that $C =P\fatsemi Q$ for some network $Q$.  If $L$ is a single layer, we abuse the notation, treat $L$ as a comparator network of depth $1$, and write $L(x)$ for the application of the layer $L$ on input $x$.

\subsection{A Sufficient Sorting Condition}
\label{sec:boolean-sorting}

Before we even start looking for sorting networks it seems necessary to check infinitely many inputs (every $x \in \zz^n$) just to determine whether a comparator network is a sorting one. However, a standard result restricts the set of sufficient inputs to the Boolean ones. Denote $\bb = \{0, 1\}$.

\begin{lemma}[\cite{Knuth}]
\label{lemma:zero-one}
Let $C$ be a comparator network. Then $C$ is a sorting network if and only if $C$ sorts every Boolean input (every $x \in \bb^n$).
\end{lemma}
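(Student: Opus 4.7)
The plan is to prove both implications, with the forward direction being immediate ($\bb^n \subseteq \zz^n$) and the backward direction being the substantive one, established via the standard $0$-$1$ principle argument: comparator networks commute with monotone maps applied componentwise.

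First, I would establish a commutation lemma: for every monotone function $f \colon \zz \to \zz$ (i.e.\ $a \leq b$ implies $f(a) \leq f(b)$) and every comparator network $C$ with $n$ channels, the identity
\[
C(f(x_1), \ldots, f(x_n)) \;=\; (f(C(x)_1), \ldots, f(C(x)_n))
\]
holds for all $x \in \zz^n$. I would prove this by induction on the depth, using the base observation that a single comparator only applies $\min$ and $\max$ to two channels, and monotonicity gives $f(\min(a,b)) = \min(f(a),f(b))$ and $f(\max(a,b)) = \max(f(a),f(b))$; channels not touched by the layer are trivial. Extending from a single layer to a composition is immediate by iterating the identity, using the layer-by-layer definition of $V(k,i)$.

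Next, I would argue the contrapositive of the nontrivial direction. Suppose $C$ fails to sort some integer input $x \in \zz^n$, so there are indices $i < j$ with $C(x)_i > C(x)_j$. Pick any threshold $t \in \zz$ with $C(x)_j < t \leq C(x)_i$ and define $f_t \colon \zz \to \bb$ by $f_t(a) = 0$ if $a < t$ and $f_t(a) = 1$ if $a \geq t$. Then $f_t$ is monotone, and by the commutation lemma $C(f_t(x)) = f_t(C(x))$, which has a $1$ at position $i$ and a $0$ at position $j$; hence $C$ fails to sort the Boolean input $f_t(x) \in \bb^n$. One small sanity check is that $C$ applied to a Boolean input indeed stays in $\bb^n$, which follows because $\min$ and $\max$ preserve $\{0,1\}$, so speaking of ``sorting a Boolean input'' is well posed.

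I do not expect any serious obstacle: the only non-mechanical step is the monotone commutation identity for a single comparator, and the whole proof is essentially the choice of the right threshold $t$ to convert an integer counterexample into a Boolean one. The argument is standard and short; the technical bookkeeping amounts to applying the commutation lemma layer by layer using the inductive definition of $V(k, \cdot)$ given in Section~\ref{sec:sorting-networks}.
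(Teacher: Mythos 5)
Your proof is correct and is the standard zero--one principle argument: the paper gives no proof of this lemma at all, deferring to Knuth, whose proof is precisely the monotone-commutation identity $C(f(x)) = f(C(x))$ followed by the threshold-function construction you describe. Nothing is missing; the forward direction, the inductive commutation step, and the choice of threshold $t$ with $C(x)_j < t \leq C(x)_i$ are all handled correctly.
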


\subsection{Output-minimal Networks}
\label{sec:output-minimal}

When looking for a sorting network $C = \gennetwork$, we can assume without loss of generality that the first layer $L_1$ is maximal, since by adding comparators to the first layer we can only restrict the set of its possible outputs. We cannot assume that all layers are maximal\nop{ (see~\cite{Knuth} for an example)}, but we can assume that the individual prefixes are maximally sorting in the following sense.
\nop{
\begin{example}
Suppose for example that channels $1$ and $2$ are not used in $L_1$. Let $L_1' = L_1 \cup \tuple{1,2}$ be the layer obtained by adding the comparator $\tuple{1,2}$ to $L_1$ and let $C'$ be the network $C$ with $L_1$ replaced by $L_1'$. Let $y \in \bb^{n-2}$ be any Boolean vector of size $n-2$. Then $L_1'(00y) = L_1(00y), L_1'(10y) = L_1'(01y) = L_1(01y)$ and $L_1'(11y)=L_1(11y)$. Hence, $C'(00y) = C(00y), C'(10y) = C(01y), C'(01y) = C(01y)$ and $C'(11y)=C(11y)$. Since $C$ is a sorting network, these outputs are all sorted, and hence $C'$ is also a sorting network.
\end{example}
}

By $\bm{\netoutput(C)} = \{ C(x) \setsep x \in \bb^n \}$ we denote the set of all possible outputs of a comparator network $C$ on Boolean inputs. The following lemma states that it suffices to consider prefixes $P$ with minimal $\netoutput(P)$.

\begin{lemma}
\label{lemma:subset}
Let $C = P\fatsemi S$ be a sorting network of depth $d$ and $Q$ be a comparator network such that $\depth(P) = \depth(Q)$ and $\netoutput(Q) \subseteq \netoutput(P)$. Then $Q\fatsemi S$ is a sorting network of depth $d$.
\end{lemma}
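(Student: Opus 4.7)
The plan is to invoke Lemma~\ref{lemma:zero-one} to reduce sortedness checking to Boolean inputs, and then use the hypothesis $\netoutput(Q) \subseteq \netoutput(P)$ to match every output of $Q$ with an output of $P$, at which point sortedness is inherited from $C$.

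First I would deal with the depth: since $\depth(Q) = \depth(P)$, the composition $Q \fatsemi S$ has depth $\depth(P) + \depth(S) = \depth(C) = d$, so it remains to show $Q \fatsemi S$ sorts. By Lemma~\ref{lemma:zero-one} it suffices to consider an arbitrary $x \in \bb^n$ and show that $(Q \fatsemi S)(x)$ is sorted. Unfolding the composition, $(Q \fatsemi S)(x) = S(Q(x))$, and the value $Q(x)$ lies in $\netoutput(Q)$.

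Now I would use the key inclusion $\netoutput(Q) \subseteq \netoutput(P)$: there exists some $y \in \bb^n$ with $P(y) = Q(x)$. Applying $S$ to both sides gives
\[
(Q \fatsemi S)(x) = S(Q(x)) = S(P(y)) = (P \fatsemi S)(y) = C(y),
\]
which is sorted because $C$ is a sorting network. Since $x$ was arbitrary, $Q \fatsemi S$ sorts every Boolean input and is therefore a sorting network.

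There isn't really a technical obstacle here; the proof is essentially a one-line observation, and the only thing to be careful about is that we work with Boolean inputs throughout (otherwise the inclusion $\netoutput(Q) \subseteq \netoutput(P)$ as defined on $\bb^n$ would not be directly applicable), which is exactly what Lemma~\ref{lemma:zero-one} licenses.
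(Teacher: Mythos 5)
Your proof is correct and follows essentially the same route as the paper's: handle the depth by additivity of composition, then for each Boolean input $x$ pick $y$ with $P(y)=Q(x)$ via the output inclusion and observe $(Q\fatsemi S)(x)=C(y)$ is sorted. The only cosmetic difference is that you make the appeal to Lemma~\ref{lemma:zero-one} explicit, whereas the paper leaves it implicit by working directly with $x\in\bb^n$.
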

\begin{proof}
Since $\depth(P) = \depth(Q)$ we have $\depth(Q\fatsemi S) = \depth(P\fatsemi S) = d$.

Let $x \in \bb^n$ be an arbitrary input. Then $Q(x) \in \netoutput(Q) \subseteq \netoutput(P)$. Hence, there is $y\in\bb^n$ such that $Q(x) = P(y)$. Thus, $(Q\fatsemi S)(x) = S(Q(x)) = S(P(y)) = (P\fatsemi S)(y) = C(y)$, which is sorted since $C$ is a sorting network.
\qed
\end{proof}

\subsection{Generalised Sorting Networks and Symmetry}

We further restrict the set of candidate sorting networks by exploiting their symmetry. To facilitate such arguments, we introduce so-called generalised comparator networks~\cite{Knuth} where we lift the condition that the min-channel of a comparator is the one with a smaller index.

Formally, a \defemph{generalised comparator network} $C$ with $n$ channels and depth $d$ is a tuple $C = \tuple{L_1, \ldots, L_d}$ whose layers $L_1, \ldots, L_d$ consists of comparators $\tuple{i, j}$ for channels $i \neq j$, such that each channel occurs at most once in each layer. A comparator $\tuple{i,j}$ is called a \defemph{min-max comparator} if $i < j$ and a \defemph{max-min comparator} otherwise. Channel $i$ receives the minimum and channel $j$ receives the maximum of the values on channels~$i$ and~$j$.

A generalised comparator network can move smaller values to the channel with larger index; we adapt the definition of a sorting network to reflect this. A generalised comparator network $C$ is a \defemph{generalised sorting network} if there exists a permutation $\pi_C$ such that for every $x \in \bb^n$ the value of $C(x)$ is sorted after applying $\pi_C$. That is, if $C(x) = (y_1, \ldots, y_n)$ then we require $(y_{\pi_C(1)}, \ldots, y_{\pi_C(n)})$ to be sorted. It is well known \cite{ParallelComplexityTheory,Knuth} that a generalised sorting network can always be untangled into an ``ordinary'' sorting network of the same dimensions. Furthermore, this operation preserves the ``ordinary'' prefix:

\begin{lemma}[\cite{ParallelComplexityTheory,Knuth}]
\label{lemma:untangle}
If $G$ is a generalised sorting network of depth $d$ then there exist a sorting network $C$ of depth $d$. Furthermore, if $G = P \fatsemi  H$ where $P$ is a comparator network then $C = P \fatsemi  I$ where $I$ is a comparator network.
\end{lemma}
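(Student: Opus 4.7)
I would prove this via the standard untangling procedure. The plan is to process the layers of $G$ from left to right and, whenever the current layer contains a max-min comparator $\langle i,j\rangle$ (so $i > j$), replace it by the min-max comparator $\langle j,i\rangle$ and transpose the labels of channels $i$ and $j$ in every subsequent layer. Since each step only relabels and re-orients, it keeps the number of comparators in each layer and hence preserves the depth. Once every max-min comparator has been eliminated, the resulting network $C$ is an ordinary comparator network of depth $d$.

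To see that $C$ is a sorting network, I would view each transposition as physically swapping two wires from that layer onwards, so that the values carried along the renamed wires are unchanged. Composing all such swaps yields a single fixed permutation $\sigma$ of the output channels with $C(x) = \sigma(G(x))$ for every input $x$. Since $G$ is a generalised sorting network with witness permutation $\pi_G$, the ordinary network $C$ sorts every input up to the fixed permutation $\pi := \pi_G \circ \sigma^{-1}$. To upgrade this to genuine sorting I would combine Lemma~\ref{lemma:zero-one} with the elementary fact that an ordinary min-max comparator leaves every already-sorted Boolean sequence $0^k 1^{n-k}$ unchanged; hence $C(0^k 1^{n-k}) = 0^k 1^{n-k}$ for each $k = 0, \ldots, n$, and the sorted-after-$\pi$ requirement forces $\pi$ to fix each of these $n+1$ sequences. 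The only permutation doing so is the identity, so $\pi$ is trivial and $C$ sorts every Boolean input, and hence every integer input by Lemma~\ref{lemma:zero-one}.

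For the ``furthermore'' clause, I would observe that the untangling only touches a layer that still contains a max-min comparator, and that each swap propagates only forward. If $P$ is already an ordinary comparator network, its layers initially contain no max-min comparators and the first $\depth(P)$ iterations trigger no swaps; thus $P$ is preserved verbatim, the untangling acts only on $H$, and produces an ordinary suffix $I$ with $C = P \fatsemi I$ as required. The main subtlety in the whole argument is the sort-up-to-permutation-implies-sort step for ordinary networks; the layer-by-layer bookkeeping for the untangling itself is routine.
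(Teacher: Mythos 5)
Your proof is correct: the left-to-right untangling with forward-propagating wire swaps, the observation that the resulting ordinary network $C$ satisfies $C(x)=\sigma(G(x))$ for a fixed permutation $\sigma$ and hence sorts up to a fixed permutation, and the argument that an ordinary comparator network fixes every $0^k1^{n-k}$ so that this permutation must be the identity, are all sound, and the ``furthermore'' clause follows exactly as you say since no swap is ever triggered inside the already-ordinary prefix $P$. The paper gives no proof of this lemma---it is cited from Knuth and from Parberry's \emph{Parallel Complexity Theory}---and your argument is precisely the standard untangling proof from those references, so there is nothing to correct.
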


Let $\pi$ be a permutation on $n$ elements. For a comparator $\tuple{i,j}$ we define the comparator $\pi(\tuple{i,j}) = \tuple{\pi(i), \pi(j)}$, and we extend the action of $\pi$ to layers and networks: $\pi(L) = \{\pi(C_1), \ldots, \pi(C_k)\}$ and $\pi(C) = \tuple{\pi(L_1), \ldots, \pi(L_d)}$. Intuitively, applying $\pi$ to a comparator network is equivalent to permuting the channels according to $\pi$; possibly flipping min-max and max-min comparators. Since a generalised sorting network sorts all inputs up to a fixed permutation~($\pi_C)$ of the output, so do its permutations $\pi(C)$ (up to the permutation $\pi_C \circ \pi^{-1}$).

\begin{lemma}[\cite{Parberry}]
\label{lemma:perm}
Let $C$ be a generalised sorting network with $n$ channels and $\pi$ be any permutation on $n$ elements. Then $\pi(C)$ is a generalised sorting network.
\end{lemma}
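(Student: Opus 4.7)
The plan is to establish an \emph{equivariance} identity that captures the intuition mentioned after the lemma statement: permuting the channels of a comparator network is the same as relabelling inputs and outputs. More precisely, writing $\pi \star z$ for the vector with $(\pi \star z)_k = z_{\pi^{-1}(k)}$, I would show that for any generalised comparator network $D$ and any $y \in \bb^n$,
\[
(\pi(D))(y) \;=\; \pi \star \bigl(D(\pi^{-1} \star y)\bigr).
\]

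I would prove this identity by induction on $\depth(D)$. For a single layer, a channel $k$ of $\pi(D)$ is either untouched (and the identity reduces to $(\pi \star y)_k = y_{\pi^{-1}(k)}$) or participates in a comparator $\pi(\tuple{i,j}) = \tuple{\pi(i),\pi(j)}$ whose min/max output on input $y$ at channels $\pi(i),\pi(j)$ is exactly the min/max that $\tuple{i,j}$ computes on the inputs $y_{\pi(i)} = (\pi^{-1} \star y)_i$ and $y_{\pi(j)} = (\pi^{-1} \star y)_j$ at channels $i,j$, which after applying $\pi\star$ sits in positions $\pi(i),\pi(j)$ again. The inductive step follows by applying the identity layer-by-layer.

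Given the identity, I would finish by exhibiting an explicit sorting permutation for $\pi(C)$. Since $C$ is a generalised sorting network with associated permutation $\pi_C$, the sequence $k \mapsto (C(x))_{\pi_C(k)}$ is sorted for every $x \in \bb^n$. Setting $\sigma := \pi \circ \pi_C$, the equivariance identity gives, for every $y \in \bb^n$ with $x := \pi^{-1} \star y$,
\[
(\pi(C)(y))_{\sigma(k)} \;=\; \bigl(\pi \star C(x)\bigr)_{\pi(\pi_C(k))} \;=\; (C(x))_{\pi_C(k)},
\]
which is non-decreasing in $k$. Hence $\pi(C)$ is a generalised sorting network with witnessing permutation $\sigma$.

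The only obstacle is bookkeeping: one has to be careful and consistent about whether $\pi$ or $\pi^{-1}$ acts, since the channel-relabelling on the network moves comparator $\tuple{i,j}$ \emph{forward} by $\pi$ while the corresponding action on input vectors moves values \emph{backward} by $\pi^{-1}$. Once this convention is fixed (as in the definition of $\star$ above), both the base case and the final composition are immediate.
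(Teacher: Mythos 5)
Your proof is correct and takes essentially the same route as the paper's (omitted) argument: permuting the channels of a network is equivalent to permuting its inputs and outputs, so $\pi(C)$ still sorts every input up to a fixed permutation obtained by composing $\pi$ with $\pi_C$; you merely make the underlying equivariance identity explicit and prove it by induction on depth. The only discrepancy is cosmetic: under the paper's convention that applying $\rho$ to an output $y$ yields $(y_{\rho(1)},\ldots,y_{\rho(n)})$, your witnessing permutation $\pi\circ\pi_C$ is the consistent one, whereas the paper's in-text $\pi_C\circ\pi^{-1}$ reflects the opposite composition convention -- either way only the existence of some witnessing permutation is needed.
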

\nop{
\begin{proof}
Fix $x \in \bb^n$. Since $C$ is a generalised sorting network, there exists a permutation $\pi_C$ such that the vector $\pi_C(C(x))$ is sorted. Hence, $\pi_C(\pi^{-1}(\pi(C)(x)))$ is sorted, so $\pi(C)$ is a generalised sorting network with the corresponding permutation equal to $\pi_C \circ \pi^{-1}$.
\qed
\end{proof}
}

Lemmas~\ref{lemma:zero-one} and \ref{lemma:subset} also hold for generalised comparator networks.

\subsection{First Layer}
\label{sec:first-layer}

We showed in Section~\ref{sec:output-minimal} that if there is a sorting network in $S(n,d)$, then there is one whose first layer is maximal. Now we show that for \emph{any} maximal layer $L$, there exists a sorting network in $S(n,d)$ whose first layer is $L$.

\begin{lemma}[\cite{Parberry}]
\label{lemma:firstfixed}
Let $L$ be a maximal layer on $n$ inputs. If there is a sorting network in $S(n,d)$ there is a sorting network in $S(n,d)$ whose first layer is $L$.
\end{lemma}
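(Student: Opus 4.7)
The plan is to reduce to the case where the sorting network in $S(n,d)$ already has a maximal first layer, then transport any such layer to $L$ by a permutation of channels, and finally untangle the resulting generalised network.

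First, I would argue that without loss of generality there is a sorting network $C = L_1 \fatsemi S$ in $S(n,d)$ whose first layer $L_1$ is maximal: adding a comparator between two channels not yet used in $L_1$ only shrinks $\netoutput$ of the one-layer prefix, so Lemma~\ref{lemma:subset} keeps us inside $S(n,d)$ (this is the argument sketched at the start of Section~\ref{sec:output-minimal}).

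Next, since both $L_1$ and $L$ are maximal, each is a matching with $\nhalf$ pairs, so they pair up the same number of channels and leave (at most) one channel unmatched. Enumerate the pairs of $L_1$ as $\tuple{i_1,j_1},\ldots,\tuple{i_k,j_k}$ with $i_m<j_m$ and the pairs of $L$ as $\tuple{i'_1,j'_1},\ldots,\tuple{i'_k,j'_k}$ with $i'_m<j'_m$. Define a permutation $\pi$ on $\{1,\ldots,n\}$ by $\pi(i_m)=i'_m$, $\pi(j_m)=j'_m$, and, if $n$ is odd, mapping the unmatched channel of $L_1$ to the unmatched channel of $L$. Because $\pi(i_m)=i'_m<j'_m=\pi(j_m)$, every comparator of $L_1$ is sent to a min-max comparator of $L$, so $\pi(L_1)=L$ as an ordinary (not merely generalised) layer.

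By Lemma~\ref{lemma:perm}, $\pi(C)$ is a generalised sorting network of depth $d$, and by construction $\pi(C)=L\fatsemi \pi(S)$ where the prefix $L$ is an ordinary comparator network. Applying Lemma~\ref{lemma:untangle} to this decomposition yields an ordinary sorting network of depth $d$ of the form $L\fatsemi I$, which is the desired member of $S(n,d)$ with first layer $L$.

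The only real subtlety is the construction of $\pi$: one must be careful that the pairs are matched in the orientation-preserving way so that $\pi(L_1)$ contains no max-min comparators, which is precisely what lets Lemma~\ref{lemma:untangle} preserve $L$ as the first layer rather than only preserving it as a generalised layer.
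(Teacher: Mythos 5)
Your proposal is correct and follows essentially the same route as the paper's proof: maximalise the first layer via the output-minimality argument of Section~\ref{sec:output-minimal}, map it onto $L$ by a channel permutation, and invoke Lemma~\ref{lemma:perm} followed by Lemma~\ref{lemma:untangle} to untangle while preserving the prefix $L$. The only difference is that you spell out the orientation-preserving construction of $\pi$ explicitly, a detail the paper leaves implicit in the claim that such a permutation exists; this is a welcome clarification but not a different argument.
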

\begin{proof}
Let $C = L_1 \fatsemi N$ be a sorting network with $L_1$ its first layer. By Section~\ref{sec:output-minimal}, if $L_1^{+} \supseteq L_1$ is a maximal layer, then $C^{+} = L_1^{+} \fatsemi N$ is also a sorting network. Since $L_1^{+}$ and $L$ are both maximal, there is a permutation $\pi$ such that $\pi(L_1^{+}) = L$. Then, $\pi(C^{+})$ is a generalised sorting network by Lemma~\ref{lemma:perm}. Now, $\pi(C^{+}) = \pi(L_1^{+}) \fatsemi \pi(N) = L \fatsemi \pi(N)$, and by Lemma~\ref{lemma:untangle} there is a comparator network $I$ such that $L \fatsemi I$ is a sorting network and $\depth(L \fatsemi I) = \depth(C^{+}) = \depth(C)$.
\qed
\end{proof}

Lemma~\ref{lemma:firstfixed} allows us to consider only networks with a given maximal first layer. For networks on $n$ inputs we fix the first layer to
\[F_n = \{\tuple{i, \lceil\textstyle\frac{n}{2}\rceil + i} \setsep 1 \leq i \leq \textstyle\nhalf\}.\]

\subsection{Second Layer}
\label{sec:second-layer}

Next we reduce the possibilities for the second layer\footnote{We assume that $n>2$ so that the first layer $F_n$ is not yet a sorting network.}, not to a single candidate but to a small set of candidate second layers. For $n=13$ we arrive at 212 candidates out of the possible $568504$ second layers.

As for the first layer, we can consider second layers modulo permutations of channels. However, we must take into account that the first layer is already fixed to $F_n$, and only consider permutations that leave the first layer intact.

\begin{lemma}[\cite{Parberry}]
\label{lemma:rotation}
Let $\pi$ be a permutation such that $\pi(F_n) = F_n$ and let $L$ be a layer on $n$ channels such that $\pi(L)$ is a layer. If $S(n,d)$ contains a network  with first layer $F_n$ and second layer $L$, it also contains a network with first layer $F_n$ and second layer $\pi(L)$.
\end{lemma}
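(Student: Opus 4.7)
The plan is to mimic the argument used in the proof of Lemma~\ref{lemma:firstfixed}: apply the permutation $\pi$ to an existing sorting network with first layer $F_n$ and second layer $L$, then invoke Lemma~\ref{lemma:untangle} to untangle the resulting generalised sorting network while preserving its first two layers.

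Concretely, suppose $C = F_n \fatsemi L \fatsemi N \in S(n,d)$ is a sorting network whose first layer is $F_n$ and second layer is $L$. Applying the permutation $\pi$ to $C$ yields
\[
\pi(C) = \pi(F_n) \fatsemi \pi(L) \fatsemi \pi(N) = F_n \fatsemi \pi(L) \fatsemi \pi(N),
\]
where the second equality uses the hypothesis $\pi(F_n) = F_n$. By Lemma~\ref{lemma:perm}, $\pi(C)$ is a generalised sorting network of depth $d$. The key observation is that the prefix $P = F_n \fatsemi \pi(L)$ is an ordinary (non-generalised) comparator network: $F_n$ is a layer by construction, and the hypothesis that $\pi(L)$ is a layer on $n$ channels guarantees that every comparator in $\pi(L)$ is a min-max comparator (i.e., indexed with $i<j$), so no max-min comparators appear in $P$.

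Since $\pi(C) = P \fatsemi \pi(N)$ with $P$ an ordinary comparator network, Lemma~\ref{lemma:untangle} supplies a comparator network $I$ such that $P \fatsemi I = F_n \fatsemi \pi(L) \fatsemi I$ is an ordinary sorting network of the same depth $d$ as $\pi(C)$. This network lies in $S(n,d)$ and has first layer $F_n$ and second layer $\pi(L)$, which is precisely what we want.

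The main subtlety is the verification that $P$ is an ordinary comparator network: without the hypothesis that $\pi(L)$ is itself a layer, the permuted comparators could be max-min, and Lemma~\ref{lemma:untangle} would only preserve $F_n$ as an ordinary prefix, leaving the second layer possibly a generalised one. Once this point is granted, the proof is a short composition of Lemmas~\ref{lemma:perm} and~\ref{lemma:untangle}.
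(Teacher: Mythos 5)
Your proof is correct and follows essentially the same route as the paper's: apply $\pi$ to $C = F_n \fatsemi L \fatsemi N$, use Lemma~\ref{lemma:perm} to get the generalised sorting network $F_n \fatsemi \pi(L) \fatsemi \pi(N)$, and untangle via Lemma~\ref{lemma:untangle} with the ordinary prefix $F_n \fatsemi \pi(L)$. Your explicit check that the hypothesis ``$\pi(L)$ is a layer'' is what makes this prefix an ordinary comparator network is a point the paper leaves implicit, and it is exactly the right thing to verify.
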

\nop{
\begin{proof}
For any layer $L$ and a sorting network $C = F_n \fatsemi L \fatsemi N$ of depth $d$, the generalised network $\pi(C) = \pi(F_n) \fatsemi \pi(L) \fatsemi \pi(N) = F_n \fatsemi \pi(L) \fatsemi \pi(N)$ is also sorting and of depth $d$. Hence, by Lemma~\ref{lemma:untangle} there exists a comparator network $I$ such that $F_n \fatsemi \pi(L) \fatsemi I$ is a sorting network of depth $d$.
\qed
\end{proof}
}

Denote by $\bm{H_n}$ the group of permutations on $n$ elements that fix $F_n$. Two layers $L$ and $L'$ are equivalent under $H_n$ if $L' = \pi(L)$ for some $\pi \in H_n$. For any set $S$ of layers, denote by $\bm{R(S)}$ a set of (lexicographically smallest) representatives of $S$ equivalent under $H_n$. Lemma~\ref{lemma:rotation} then implies that it suffices to consider networks with second layers from $R(G_n)$.

Recall from Lemma~\ref{lemma:subset} that it is enough to consider prefixes of comparator networks with minimal sets of possible outputs. We apply a symmetry argument similar to Lemma~\ref{lemma:rotation} to the sets of possible outputs, and observe that it extends to all permutations on $n$ channels. In particular we show that it is enough to consider second layers whose sets of possible outputs are minimal up to any permutation of channels.

\begin{lemma}
\label{lemma:symsubset}
Let $L_a$ and $L_b$ be layers on $n$ channels such that $\netoutput(F_n \fatsemi L_b) \subseteq \pi(\netoutput(F_n \fatsemi L_a))$ for some permutation $\pi$ on $n$ channels. If $S(n,d)$ contains a network with first layer $F_n$ and second layer $L_a$, it also contains a network with first layer $F_n$ and second layer $L_b$.
\end{lemma}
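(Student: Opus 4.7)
The plan is to combine the three tools from the previous subsections: the symmetry Lemma~\ref{lemma:perm}, the prefix-substitution Lemma~\ref{lemma:subset}, and the untangling Lemma~\ref{lemma:untangle}.

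First, start from any sorting network $C = F_n \fatsemi L_a \fatsemi N \in S(n,d)$ whose existence is assumed in the hypothesis. Applying the permutation $\pi$ channel-wise, Lemma~\ref{lemma:perm} says that $\pi(C) = \pi(F_n) \fatsemi \pi(L_a) \fatsemi \pi(N)$ is a generalised sorting network of depth $d$. A short check (essentially: relabelling channels by $\pi$ permutes the outputs accordingly) shows $\netoutput(\pi(F_n \fatsemi L_a)) = \pi(\netoutput(F_n \fatsemi L_a))$; I would either justify this in one sentence or factor it out as a small observation preceding the lemma.

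Next, use the hypothesis $\netoutput(F_n \fatsemi L_b) \subseteq \pi(\netoutput(F_n \fatsemi L_a)) = \netoutput(\pi(F_n) \fatsemi \pi(L_a))$. Setting $P = \pi(F_n) \fatsemi \pi(L_a)$, $Q = F_n \fatsemi L_b$ and $S = \pi(N)$, both $P$ and $Q$ have depth $2$ and $\netoutput(Q) \subseteq \netoutput(P)$. By the generalised version of Lemma~\ref{lemma:subset} (noted at the end of Section~3.3 to hold for generalised comparator networks), $Q \fatsemi S = F_n \fatsemi L_b \fatsemi \pi(N)$ is a generalised sorting network of depth $d$.

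Finally, the prefix $F_n \fatsemi L_b$ is an ordinary (min-max) comparator network by construction, so Lemma~\ref{lemma:untangle} applied to $G = (F_n \fatsemi L_b) \fatsemi \pi(N)$ produces an ordinary sorting network $F_n \fatsemi L_b \fatsemi I$ of depth $d$, which is the desired member of $S(n,d)$. The only genuinely delicate step is the identity $\netoutput(\pi(F_n \fatsemi L_a)) = \pi(\netoutput(F_n \fatsemi L_a))$; once that is accepted, the rest is bookkeeping, and the key conceptual point is that unlike Lemma~\ref{lemma:rotation} we are free to use \emph{any} permutation $\pi$ because we no longer need $\pi(F_n) = F_n$: the untangling step restores $F_n$ as the first layer for us.
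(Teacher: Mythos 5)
Your proposal is correct and follows essentially the same route as the paper's own proof: apply $\pi$ to get a generalised sorting network, use the identity $\pi(\netoutput(F_n\fatsemi L_a)) = \netoutput(\pi(F_n\fatsemi L_a))$ together with the generalised form of Lemma~\ref{lemma:subset} to substitute the prefix $F_n\fatsemi L_b$, and then untangle via Lemma~\ref{lemma:untangle}. Your closing remark about why arbitrary $\pi$ is admissible here (unlike in Lemma~\ref{lemma:rotation}) is a correct and worthwhile observation, but the argument itself is the paper's.
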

\begin{proof}
Let $C = F_n \fatsemi L_a \fatsemi N$ be a sorting network of depth $d$. Then $\pi(C) = \pi(F_n \fatsemi L_a) \fatsemi \pi(N)$ is a generalised sorting network. Since $\netoutput(F_n \fatsemi L_b) \subseteq \pi(\netoutput(F_n \fatsemi L_a)) = \netoutput(\pi(F_n \fatsemi L_a))$, Lemma~\ref{lemma:subset} implies that $F_n \fatsemi L_b \fatsemi \pi(N)$ is also a generalised sorting network. Then, by Lemma~\ref{lemma:untangle}, there exists a comparator network $I$ such that $F_n \fatsemi L_b \fatsemi I$ is a sorting network of depth $d$.
\qed
\end{proof}

If $\netoutput(F_n \fatsemi L_b) \subseteq \pi(\netoutput(F_n \fatsemi L_a))$ for some permutation $\pi$, we write $L_b \bm{\leq_{po}} L_a$ where $po$ stands for \emph{permuted outputs}. For a set $S$ of layers, denote by $\bm{R_{po}(S)}$ a minimal set of representatives from $S$ such that for each $s\in S$, there is a representative $r \in R_{po}(S)$ such that $r \leq_{po} s$. Lemma~\ref{lemma:symsubset} implies that it suffices to consider second layers from $R_{po}(G_n)$. Fig.~\ref{fig:nolayers} compares numbers of candidate layers $|R_{po}(G_n)|$ and $|R(G_n)|$ with $|G_n|$ for various~$n$.

\subsubsection{Computing the Representatives $R_{po}(G_n)$}

Although we can speed up the search for sorting networks dramatically by only considering second layers from $R_{po}(G_n)$ instead of $G_n$, computing $R_{po}(G_n)$ is a non-trivial task even for $n = 13$.

\nop{Any set $S$ of layers induces a directed graph with an edge from $s$ to $r$ iff $s \geq_{po} r$. The set of representatives $R_{po}(S)$ contains exactly one element from each bottom strongly connected component of this graph.}

Just establishing the inequality $L_a \geq_{po} L_b$ for two layers $L_a$ and $L_b$ involves the comparison of sets $\netoutput(F_n \fatsemi L_b)$ and $\pi(\netoutput(F_n \fatsemi L_a))$, both of size up to $2^n$, for all permutations $\pi$. A naive algorithm comparing all sets of outputs for all pairs of layers thus takes time $O(|G_n|^2 \cdot n! \cdot 2^n)$, and is infeasible\nop{\footnote{$|G_{13}| =  568504, 13! = 6227020800$ and $2^{13} = 8192$}} for $n = 13$. We present three techniques to speed up the computation of $R_{po}(G_n)$.

First we note that in the second layer it is useless to repeat a comparator from the first layer, and in most other cases adding a comparator to the second layer decreases the set of its possible outputs. Call a layer $L$ \defemph{saturated} if it contains no comparator from $F_n$, and its unused channels are either all min-channels, or all max-channels of comparators from $F_n$. Let $S_n$ be the set of all saturated layers on $n$ channels.

\begin{lemma}
Let $n$ be odd and let $L$ be a layer on $n$ channels. There exists a saturated layer $S$ such that $S \leq_{po} L$.
\end{lemma}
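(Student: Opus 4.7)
The plan is to construct the saturated layer $S$ from $L$ in two stages that each respect $\leq_{po}$, one canonicalising $L$ and one augmenting it with fresh comparators on its unused channels.

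First I would reduce to the case that $L$ contains no comparator from $F_n$. If $\langle i,\lceil n/2\rceil+i\rangle \in L\cap F_n$, then after $F_n$ the two channels $i$ and $\lceil n/2\rceil+i$ are already ordered, so the comparator in $L$ acts as the identity on every Boolean input. Writing $L' = L \setminus F_n$, we therefore have $\netoutput(F_n \fatsemi L') = \netoutput(F_n \fatsemi L)$, which via the identity permutation gives $L' \leq_{po} L$. It thus suffices to exhibit a saturated $S$ with $S \leq_{po} L'$.

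Next I would augment $L'$ by adding extra comparators on its unused channels to produce a saturated $S$. Let $U_M$, $U_X$ and $U_{\mathrm{mid}}\subseteq\{\lceil n/2\rceil\}$ be the sets of unused min-channels, unused max-channels and the middle channel (if unused), and note $|U_M|+|U_X|+|U_{\mathrm{mid}}|$ is odd because $n$ is odd. By Lemma~\ref{lemma:subset} every comparator I add can only shrink $\netoutput(F_n \fatsemi \cdot)$, so the identity permutation already certifies $S\leq_{po}L'$. To make $S$ saturated I must (i) use the middle channel and (ii) arrange that every remaining unused channel is either all in $U_M$ or all in $U_X$, while avoiding reintroducing any comparator of $F_n$. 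A case split on whether $U_{\mathrm{mid}}$ is empty and on the parities of $|U_M|$ and $|U_X|$ covers every configuration: max--max and min--min pairs are never in $F_n$, middle--min and middle--max pairs are never in $F_n$, and the only forbidden cross-pair from $U_M$ to $U_X$ is the single $F_n$-partnership $\{i, \lceil n/2\rceil+i\}$ for each index $i$, which is easy to avoid.

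The main obstacle is precisely this last avoidance step combined with the degenerate boundary case in which almost no unused channels remain in $L'$, so that the bipartite matching between $U_M$ and $U_X$ has to be extracted by a Hall-type argument rather than by a naive pairing. Here I expect to exploit the odd parity of $n$ (which ensures that after removing the middle the remaining unused channels split evenly between $U_M$ and $U_X$, or that $|U_M|$ and $|U_X|$ differ by exactly one) together with Lemma~\ref{lemma:perm} applied to a suitable $\pi \in H_n$: such a $\pi$ permutes the $F_n$-pairs of $L'$ without changing $F_n$ itself, so it realigns the forbidden partnerships away from the unused channels, after which the augmentation of the previous paragraph goes through and Lemma~\ref{lemma:untangle} yields an ordinary saturated $S$ of the required form.
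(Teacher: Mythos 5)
Your first stage (deleting the comparators of $L$ that lie in $F_n$) matches the paper and is fine. The proof breaks at the second stage, on the single sentence that carries all the weight: ``By Lemma~\ref{lemma:subset} every comparator I add can only shrink $\netoutput(F_n\fatsemi\cdot)$.'' Lemma~\ref{lemma:subset} says nothing of the sort --- it only lets you \emph{replace} a prefix by one already known to have a smaller output set; it gives no monotonicity of $\netoutput$ under adding a comparator to the second layer. Worse, the claim itself is false for exactly the pairs your case split relies on. Take $n=5$, $F_5=\{\tuple{1,4},\tuple{2,5}\}$, $L'=\emptyset$, and add the min--min comparator $\tuple{1,2}$. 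On input $x=(1,0,0,1,0)$ we get $F_5(x)=(1,0,0,1,0)$, and the added comparator turns this into $(0,1,0,1,0)$; but that vector is not in $\netoutput(F_5)$, since channel $2$ equal to $1$ forces $x_2=x_5=1$ while channel $5$ equal to $0$ forces $x_2=x_5=0$. So adding a min--min (and symmetrically a max--max) comparator can \emph{enlarge} the output set, and your augmentation step does not certify $S\leq_{po}L'$.

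The paper avoids this by only ever adding comparators $\tuple{i,j}$ where $i$ is a min-channel and $j$ is a max-channel of $F_n$ (or one endpoint is the middle channel), and by proving the inclusion $\netoutput(F_n\fatsemi L^1)\subseteq\netoutput(F_n\fatsemi L^0)$ directly: if the new comparator changes the output, then channel $i$ carried $1$ and channel $j$ carried $0$ after $F_n$, which (because $i$ is a min-channel and $j$ a max-channel of $F_n$) pins down the inputs on those $F_n$-pairs, and flipping $x_i$ to $0$ and $x_j$ to $1$ reproduces the new output already in $F_n\fatsemi L^0$. This is the missing idea in your proposal, and it also explains why the process stops at a \emph{saturated} layer rather than a maximal one: once the unused channels are all min-channels or all max-channels there is no legal crossing pair left to add, and that is precisely the definition of saturated. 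Your closing appeal to a Hall-type matching and to Lemma~\ref{lemma:perm} is unnecessary --- the paper's parity observation (for odd $n$ a non-saturated $L^0$ has at least $3$ unused channels, so some crossing or middle pair avoids $F_n$) settles the boundary case --- and permuting the layer by some $\pi\in H_n$ would in any case produce a different layer, not the inclusion of output sets you need.
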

\begin{proof}
Let $L$ be any layer on $n$ channels. First construct $L^0$ by removing from $L$ all comparators that also appear in $F_n$. For any input, $L$ and $L^0$ give the same output, so $\netoutput(F_n \fatsemi L) = \netoutput(F_n \fatsemi L^0)$. Next, suppose that $L^0$ is not saturated. Then one of the following holds.
\begin{compactitem}
\item We can add a comparator between a channel $i \leq \nhalf$, which is a min-channel in $F_n$, and a channel $j \geq \lceil\frac{n}{2}\rceil+1$, which is a max-channel in $F_n$ such that $\tuple{i,j}$ is not a comparator from $F_n$. (If $n$ is odd and $L^0$ is not saturated, there are at least 3 unused channels, and we can always choose a pair which is not in $F_n$, not a pair of min-channels, and not a pair of max-channels from $F_n$.) Denote $L^1 = L^0 \cup {\tuple{i,j}}$ and consider the output of $F_n\fatsemi L^1$ on some input $x \in \bb^n$. We will show that $(F_n\fatsemi L^1)(x)$ can also arise as the output of $F_n\fatsemi L^0$.

If $(F_n\fatsemi L^1)(x) \neq (F_n\fatsemi L^0)(x)$, then the output of $(F_n\fatsemi L^0)(x)$ must be 1 on channel $i$ and 0 on channel $j$, and the added comparator $\tuple{i,j}$ flips these values in the output of $F_n\fatsemi L^1$. Since channel $i$ is the min-channel of the comparator $\tuple{i, i+\lceil\frac{n}{2}\rceil}$ in $F_n$, both channels $i$ and $i+\lceil\frac{n}{2}\rceil$ must carry the value 1 in the input $x$. Similarly, since channel $j$ is the max-channel of the comparator $\tuple{j-\lceil\frac{n}{2}\rceil, j}$ of $F_n$, both channels $j$ and $j-\lceil\frac{n}{2}\rceil$ must carry the value 0 in the input $x$. By changing the value of channel $i$ to 0 and the value of channel $j$ to 1, these changes propagate to the output in $F_n\fatsemi L^0$, and yield the same output as that of $F_n\fatsemi L^1$ on $x$. It follows that of $\netoutput(F_n\fatsemi L^1) \subseteq \netoutput(F_n\fatsemi L^0)$.
\item We can add a comparator between some channel $i$ and channel $j = \lceil\frac{n}{2}\rceil$, which is unused in $F_n$, obtaining a layer $L^1$. Similarly as in the previous case we can prove that $\netoutput(F_n\fatsemi L^1) \subseteq \netoutput(F_n\fatsemi L^0)$.
\nop{
\item We can add a comparator between a channel $i \leq \nhalf$, which is a min-channel in $F_n$, and channel $j = \lceil\frac{n}{2}\rceil$, which is unused in $F_n$, yielding a layer $L^1$. We can prove that $\netoutput(F_n\fatsemi L^1) \subseteq \netoutput(F_n\fatsemi L^0)$ similarly as above, except that the channel $j$ is unused in $F_n$ and its input is propagated to the second layer automatically.
\item We can add a comparator between the unused channel and a max-channel in $F_n$, yielding $L^1$, and similarly as above $\netoutput(F_n\fatsemi L^1) \subseteq \netoutput(F_n\fatsemi L^0)$.
}
\end{compactitem}
By induction, we obtain layers $L^1, L^2, \dots$, until some $L^k$ is saturated. Then $\netoutput(F_n \fatsemi L^k) \subseteq \netoutput(F_n \fatsemi L^0) = \netoutput(F_n \fatsemi L)$, so $L^k \leq_{po} L$.
\qed
\end{proof}

Second we note that if two networks are the same up to a permutation $\pi$, then their sets of outputs are also the same up to $\pi$. In particular, $L \leq_{po} \pi(L)$ for any layer $L$ and any $\pi \in H_n$. This observation and the above lemma together imply that it suffices to consider representatives of saturated layers up to permutations from $H_n$ before computing the representatives with respect to $\leq_{po}$.

\begin{lemma}
For odd $n$, we have $R_{po}(G_n) = R_{po}(R(S_n))$.
\end{lemma}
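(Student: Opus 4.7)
The plan is to combine the previous saturation lemma with an $H_n$-symmetry observation so as to pass from all of $G_n$ down to the (much smaller) set $R(S_n)$ without losing any $\leq_{po}$-minimal equivalence class. The target equality should be read at the level of $\leq_{po}$-equivalence classes of minimal layers: both sides select exactly one representative from each such class.

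First, I would note that $\leq_{po}$ is a preorder (reflexivity is immediate, and transitivity follows by composing witnessing permutations). The preceding lemma supplies, for every $L\in G_n$, a saturated layer $S\in S_n$ with $S\leq_{po} L$, so every $\leq_{po}$-equivalence class in $G_n$ already meets $S_n$. Consequently a valid set of $R_{po}$-representatives for $G_n$ can be chosen entirely inside $S_n$, giving $R_{po}(G_n) = R_{po}(S_n)$ at the level of equivalence classes.

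Second, I would show that the $H_n$-action on layers both preserves $S_n$ and is absorbed by $\leq_{po}$. Fix $\pi\in H_n$ and any layer $L$. Because $\pi(F_n)=F_n$ we have $F_n\fatsemi\pi(L) = \pi(F_n\fatsemi L)$, so $\netoutput(F_n\fatsemi\pi(L)) = \pi(\netoutput(F_n\fatsemi L))$; taking the witnessing permutation to be $\pi$ itself yields $\pi(L)\leq_{po}L$, and running the same argument with $\pi^{-1}$ gives $L\leq_{po}\pi(L)$. For saturation, note that when $n$ is odd, every $\pi\in H_n$ must fix the middle channel $\lceil n/2\rceil$ (the channels appearing in $F_n$ form the complement of $\{\lceil n/2\rceil\}$, and $\pi$ must preserve this set) and permute the remaining min- and max-channels among themselves; so if $L\in S_n$ then its unique unused channel is a min- or max-channel of $F_n$, and $\pi$ sends it to another such channel, keeping $\pi(L)$ saturated. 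Hence each $H_n$-orbit of $S_n$ lies inside a single $\leq_{po}$-equivalence class.

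Combining the two steps, $R(S_n)$ picks out one layer from each $H_n$-orbit of $S_n$, and since every such orbit is contained in a single $\leq_{po}$-class, no $\leq_{po}$-class of $S_n$ is discarded. Therefore $R_{po}(R(S_n))$ and $R_{po}(S_n)$ enumerate the same collection of $\leq_{po}$-classes, and by the first step this also equals $R_{po}(G_n)$, as required. The main obstacle I expect is interpretive rather than technical: since $R_{po}$ is only pinned down up to the choice of one element per equivalence class, one must be careful to state the asserted set equality at the level of $\leq_{po}$-equivalence classes of minimal elements; once this reading is fixed, the proof reduces to the two easy facts above.
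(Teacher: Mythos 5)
Your argument is correct and matches the paper's (unstated but implied) justification exactly: the saturation lemma lets every $\leq_{po}$-class be represented inside $S_n$, and the observation that $L\leq_{po}\pi(L)\leq_{po}L$ for $\pi\in H_n$ (together with $H_n$ preserving saturation) shows that passing to $R(S_n)$ discards no class. One cosmetic slip: a saturated layer need not have a \emph{unique} unused channel (e.g.\ $\{\tuple{4,5},\tuple{6,7}\}$ for $n=7$ leaves channels $1,2,3$ unused), but your argument only needs that all unused channels are min-channels or all are max-channels and that $\pi\in H_n$ preserves these two sets, which it does.
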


Checking whether a layer is saturated only takes time $O(n^2)$ and computing $R(\cdot)$ involves checking only $\nhalf{}!$ permutations compared to all $n!$ for $R_{po}(\cdot)$. Instead of computing $R_{po}(G_n)$ directly, we first compute $R(S_n)$ and only on this much smaller set we compute the most expensive reduction operation~$R_{po}$. Figure~\ref{fig:nolayers} summarises the number of layers, saturated layers, representatives and representatives modulo rotation for different~$n$.

Finally we show how to compute representatives $R_{po}$. Recall that $L_b \bm{\leq_{po}} L_a$ iff $\netoutput(F_n \fatsemi L_b) \subseteq \pi(\netoutput(F_n \fatsemi L_a))$ for some permutation $\pi$. A necessary condition for $\netoutput(F_n \fatsemi L_b) \subseteq \pi(\netoutput(F_n \fatsemi L_a))$ is that the number of outputs of $(F_n \fatsemi L_a)$ where channel $i$ is set to $1$ is at least the number of outputs of $(F_n \fatsemi L_b)$ where channel $\pi(i)$ is set to $1$. We obtain a similar necessary condition by considering only outputs with value 1 on exactly $k$ channels. For each $i = 1,\dots, n$ and each $k = 0, \dots, n$ we obtain a necessary condition on $\pi$ for $\netoutput(F_n \fatsemi L_b) \subseteq \pi(\netoutput(F_n \fatsemi L_a))$ to hold. These conditions are fast to check and significantly prune the space of possible permutations $\pi$, thereby making the check $L_b \bm{\leq_{po}} L_a$ feasible for any two layers $L_b$ and $L_a$. For $n = 13$ we were able to compute $R(S_n)$ in 2 seconds and subsequently $R_{po}(R(S_n))$ in 32 minutes.

\nop{
We also include in Figure~\ref{fig:nolayers} the numbers of representatives in cases when the first layer $L_1$ is fixed to \nop{a layer other than $F_n$. We show values for the layer $F_n' = \{\tuple{i,\nhalf + i} \setsep 1 \leq i \leq \nhalf \}$ where the unused channel is moved to the end, and} the layer $F'_n = \{\tuple{2i-1,2i} \setsep 1 \leq i \leq \nhalf \}$ used in~\cite{Parberry}. Our choice of $F_n$ allows faster computation of representatives $R_{po}(G_n)$, and yields the smallest $|R_{po}(G_n)|$, of all sampled first layers.
}

\begin{figure}
\begin{center}
\begin{tabular}{|l|@{~}c@{~}|@{~}c@{~}|@{~}c@{~}|@{~}c@{~}|@{~}c@{~}|@{~}c@{~}|@{~}c@{~}|@{~}c@{~}|@{~}c@{~}|@{~}c@{~}|@{~}c@{~}|}
\hline
$n$ & 3  & 4 & 5 & 6 & 7 & 8 & 9 & 10 & 11 & 12 & 13 \\
\hline
\hline
$|G_n|$ & 4 & 10 & 26 & 76 & 232 & 764 & 2620 & 9496 & 35696 & 140152 & 568504 \\
\hline
$|S_n|$ & 2 & 7 & 10 & 51 & 74 & 513 & 700 & 6345 & 8174 & 93255 & 113008 \\
\hline
$|R(G_n)|$ & 4 & 8 & 18 & 28 & 74 & 101 & 295 & 350 & 1134 & 1236 & 4288 \\
\hline
$|R(S_n)|$ & 2 & - & 8 & - & 29 & - & 100 & - & 341 & - & 1155 \\
\hline
$|R_{po}(G_n)|$ & 2 & 2 & 6 & 6 & 14 & 15 & 37 & 27 & 88 & 70 & 212 \\
\hline
\end{tabular}
\end{center}
\caption{
Number of candidates for second layer on $n$ channels. Candidate sets are: $G_n = $ set of all layers, $S_n = $ set of saturated layers, $R(S) = $ set of representatives of $S$ under permutations fixing the first layer, $R_{po}(S) = $ set of representatives of $S$ under permuted outputs. Note that $R(S_n)$ is used to compute $R_{po}(G_n)$ only for odd $n$.}
\label{fig:nolayers}
\end{figure}


\section{Propositional Encoding of Sorting Networks}
\label{section:sat}

In the previous section we showed how to restrict the set of possible first two layers of sorting networks. In this section we describe how to reduce the existence of such a sorting network to the satisfiability of a set of propositional formulas. We then employ the power of modern SAT solvers to determine the satisfiability of the obtained formulas.

Recall that to check whether a comparator network is a sorting one it suffices to consider only its outputs on Boolean inputs (Lemma~\ref{lemma:zero-one}). Now, for Boolean values $x, y \in \bb$ a min-max comparator reduces to: $\min(x, y) = x \wedge y$ and $\max(x, y) = x \vee y$. The authors of~\cite{Germans} observed that a comparator network of a given size can be represented by a propositional formula, and the existence of a sorting network in $S(n,d)$ is equivalent to its satisfiability. We improve upon the work of~\cite{Germans} and give a more natural translation to propositional formulas.

We represent a comparator network with $n$ channels and depth $d$ by Boolean variables $C_n^d = \{g_{i,j}^k\}$ for $1 \leq i < j \leq n$ and $1 \leq k \leq d$. The variable $g_{i,j}^k$ indicates whether the comparator $\tuple{i,j}$ occurs in layer $k$. We then define
\begin{eqnarray*}
\textstyle\once^k_i(C_n^d) & = & \textstyle \bigwedge_{1 \leq i \neq j \neq l \leq n} (\neg g_{\min(i,j),\max(i,j)}^k \vee \neg g_{\min(i,l),\max(i,l)}^k) \qquad\textrm{and}\\
\valid(C) & = & \textstyle\bigwedge_{1 \leq k \leq d,\:1 \leq i \leq n} \textstyle\once^k_i(C_n^d),
\end{eqnarray*}
where $\textstyle\once^k_i(C_n^d)$ enforces that channel $i$ is used at most once in layer $k$, and $\valid(C_n^d)$ enforces that this constraint holds for each channel in every layer, i.e., that $C$ represents a valid comparator network.

Let $x = \tuple{x_1, \ldots, x_n} \in \bb^n$ be a Boolean input and $y = \tuple{y_1, \ldots, y_n}$ be the sequence obtained by sorting $x$. To evaluate the network $C_n^d$ on an input $x$ we introduce variables $v^k_i$ for $0\leq k \leq d$ and $1 \leq i \leq n$ denoting $V(k,i)$--the value of channel $i$ after layer $k$. The correct value of $v^k_i$ is enforced by $\updatechannel^k_i(C_n^d)$ which implements the recursive formula for $V(k,i)$ from Section~\ref{sec:sorting-networks}:
\begin{eqnarray*}
\textstyle\updatechannel^k_i(C_n^d) & = & (\neg \textstyle\used^k_i(C_n^d) \implies (v^k_i \leftrightarrow v^{k-1}_{i})) \wedge \\
                             &    & \textstyle\bigwedge_{1 \leq j < i} \left[ g_{j,i}^k \implies (v^{k}_{i} \leftrightarrow (v^{k-1}_{j} \vee v^{k-1}_{i})) \right] \wedge \\
                             &    &\textstyle\bigwedge_{i < j \leq n} \left[ g_{i,j}^k \implies (v^k_i \leftrightarrow (v^{k-1}_{j} \wedge v^{k-1}_{i})) \right]\qquad\textrm{and} \\
\textstyle\used^k_i(C_n^d) & = & \textstyle\bigvee_{j < i}g_{j,i}^k \vee \bigvee_{i < j}g_{i,j}^k,
\end{eqnarray*}
where the formula $\used^k_i(C_n^d)$ denotes whether channel $i$ is used in layer $k$. We can express the predicate ``$C_n^d(x)$ is sorted'' as:
\begin{eqnarray*}
\sorts(C_n^d,x) & = & \textstyle\bigwedge_{1 \leq i \leq n} (v^0_i \leftrightarrow x_i) \wedge \bigwedge_{\substack{1 \leq k \leq d, \\ 1 \leq i \leq n}} \updatechannel^k_i(C_n^d) \wedge \bigwedge_{1 \leq i \leq n} (v^d_i \leftrightarrow y_i)
\end{eqnarray*}
where the first term ensures that we start with the input $x$, the second term that the $v^k_i$ update appropriately, and the last term that the output is sorted.

\begin{lemma}
\label{lem:sat}
A sorting network with $n$ channels and depth $d$ exists if and only if $\valid(C_n^d) \wedge \bigwedge_{x \in \bb^n} \sorts(C_n^d,x)$ is satisfiable.
\end{lemma}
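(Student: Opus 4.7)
The plan is to prove both directions of the biconditional by exhibiting a correspondence between sorting networks of dimensions $(n,d)$ and satisfying assignments of the formula. For the forward direction, given a sorting network $N = \tuple{L_1,\dots,L_d}$, I would construct an assignment by setting $g_{i,j}^k$ to true iff $\tuple{i,j}\in L_k$, and, for each input $x\in\bb^n$, setting the channel variable $v_i^k$ (using a fresh copy per input, as explained below) to the actual value $V(k,i)$ produced by running $N$ on $x$. The $\valid(C_n^d)$ conjuncts follow because $N$ is a legal comparator network, so each channel is used at most once per layer. Each $\sorts(C_n^d,x)$ conjunct follows by checking the three cases of the definition of $V(k,i)$ against the three clauses of $\updatechannel^k_i$, and by noting that since $N$ sorts $x$, the final values $V(d,i)$ agree with the sorted sequence $y$.

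For the backward direction, given a satisfying assignment, I would define a comparator network $N$ whose layer $k$ consists of exactly those comparators $\tuple{i,j}$ with $g_{i,j}^k$ true. The $\once^k_i$ conjuncts in $\valid(C_n^d)$ guarantee that each channel appears in at most one comparator of each layer, so $N$ is a legitimate comparator network of depth $d$. For each Boolean input $x$, an induction on $k$ using the $\updatechannel^k_i$ and $\used^k_i$ formulas shows that the assigned value of $v_i^k$ must equal $V(k,i)$; the final conjunct $v_i^d \leftrightarrow y_i$ then says precisely that $N(x)$ equals the sorted sequence $y$. Since this holds for all $x\in\bb^n$, Lemma~\ref{lemma:zero-one} lifts this to sorting on all integer inputs, so $N\in S(n,d)$.

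The only real subtlety, rather than a deep obstacle, is bookkeeping of the $v$-variables: the formula $\bigwedge_{x\in\bb^n}\sorts(C_n^d,x)$ tacitly requires a fresh family of variables $v_i^k$ for each input $x$, because distinct inputs yield distinct channel histories even under the same network. I would make this explicit by indexing as $v_{i,x}^k$ in each copy of $\sorts$, after which the remainder of the argument is a routine unfolding of the definitions of $\valid$, $\updatechannel$, $\used$, and $\sorts$.
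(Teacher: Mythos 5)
Your proof is correct. The paper states Lemma~\ref{lem:sat} without proof, treating it as immediate from the construction of $\valid$, $\updatechannel$, and $\sorts$; your two-direction argument (reading off an assignment from a network via the run values $V(k,i)$, and conversely extracting a network from the $g$-variables and showing by induction on layers that the $v$-variables are forced to equal $V(k,i)$, then invoking Lemma~\ref{lemma:zero-one}) is exactly the routine unfolding the authors have in mind, and you correctly identify and resolve the one real subtlety, namely that each conjunct $\sorts(C_n^d,x)$ must use its own copy $v^k_{i,x}$ of the value variables while the $g^k_{i,j}$ are shared across all inputs.
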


Further, for inputs of the form $x = 0^py1^q$, we hard-wire the variables $v^k_i$ in the formula $\sorts(C^d_n,x)$ to false for $1 \leq i \leq p$ and to true for $n-q < i \leq n$. These values are implied by the $\updatechannel^k_i(C^d_n)$ formulas (see also Example~\ref{ex:subnet}). However, we find that hard-wiring these values speeds up the SAT solver approximately by a factor of 4 for $n \leq 12$ as the SAT solver is not able to discover them directly by unit propagation.

In Section~\ref{sec:reduction} we showed that it suffices to consider sorting networks with first layer $F_n$ and second layer $S \in R_{po}(G_n)$. We can incorporate such restriction into the propositional formula easily. For the first layer, let $T = \netoutput(F_n)$ be the set of possible outputs, then a sorting network with $n$ channels, depth $d$, and first layer $F_n$ exists if and only if $\valid(C_n^{d-1}) \wedge \bigwedge_{x \in T} \sorts(C_n^{d-1},x)$ is satisfiable. A similar adaptation works for fixing the first two layers; we produce one formula for each $S \in R_{po}(G_n)$ and check the satisfiability of each of them.

Instantiating these SAT formulas and checking their satisfiability was sufficient to establish $V(n)$ for $n \leq 12$ in less than 2 minutes in each case (see Fig.~\ref{table:performance}). A further optimisation substantially reduced the time to establish $V(13)$.

\subsection{Existence of Subnetworks: a Necessary Condition}
\label{section:okienka}

Our final optimisation in showing the nonexistence of sorting network is restricting attention to inputs of the form $0^p y 1^q$. This optimisation is based on the idea that if a comparator network sorts its input, its subnetworks must also sort their respective subinputs. Consider the following example.

\begin{example}
\label{ex:subnet}
Consider the evaluation of a sorting network $C$ on input $0x$ where $x \in \bb^{n-1}$. Since $C$ consists of min-max comparators, the value on the first channel is always~$0$. Hence, also the output of the first channel is~$0$. (See also Fig.~\ref{fig:network-example}.) Let $D$ be the comparator network obtained from $C$ by removing the first channel and all comparators attached to it. Then $C(0x) = 0D(x)$ for all $x$. Requiring that $C(0x)$ is sorted for all $x \in \bb^{n-1}$ is the same as requiring that $D$ is a sorting network. A similar argument can be made for inputs of the form $y1$ for $y \in \bb^{n-1}$, and in general for $0^p y 1^q$ for $y \in \bb^{n-p-q}$.
\end{example}

Let $T^{p,q} = \{t = 0^p x 1^q \setsep t \in T, x \in \bb ^{n-p-q} \} \subseteq T$ be the set of all inputs from $T$ beginning with $p$ zeros and ending with $q$ ones. Intuitively, evaluating a network $C$ on inputs from $T^{p,q}$ exercises only the subnetwork obtained by removing first $p$ and last $q$ channels from $C$.

For subnetwork size $m < n$ let $T_m = \bigcup_{p+q = n-m} T^{p,q}$. Then $T_m \subseteq T$ and so if network $C$ sorts all inputs from $T$ then $C$ sorts all inputs from $T_m$. Therefore, a necessary condition for the existence of a network on $n$ channels and depth $d$ sorting inputs $T$ is the satisfiability of the formula
\[\textstyle\okienka(n,d,m,T) = \valid(C_n^d) \wedge \bigwedge_{x \in T_m} \sorts(C_n^d,x).\]

Empirically, we were always able to find $m$ with $m < n$ such that the resulting formula $\okienka(n,d,m,T)$ was unsatisfiable. Furthermore, the SAT solver established unsatisfiability of this formula significantly faster than for the original formula (see Fig.~\ref{table:performance}).

\nop{
Let $T_p^q = \{t = 0^p x 1^q \setsep t \in T, x \in \bb ^{n-p-q} \}$ be the set of all inputs from $T$ beginning with $p$ zeros and ending with $q$ ones. Let $C_p^q$ be the network obtained from $C$ by removing the first $p$ channels, last $q$ channels and all comparators attached to them. Then the existence of $C_p^q$ is a necessary condition to the existence of $C$.

\begin{lemma}
\label{lemma:okienka}
If $C$ sorts every input in $T$ then for every $p$ and $q$ there exists a comparator network (e.g., $C_p^q$) sorting every input in $T_p^q$.
\end{lemma}

Fix a \defemph{window size} $1 \leq l \leq n$. We consider the networks $C_p^q$ for all pairs of $p$ and $q$ such that the number of remaining channels $n - p - q$ equals $l$. The condition that each such $C_p^q$ sorts the corresponding $T_p^q$ holds if and only if the propositional formula
$$\okienka(n,d,l,T) = \textstyle\bigwedge_{0 \leq p, 0\leq q, p + q = n-l} \big[\valid(C_p^q,  l, d) \wedge \bigwedge_{t \in T_p^q} \sorts(C_p^q, t)\big]$$
is satisfiable. Observe that $C_p^q$ and $C_{p+1}^{q-1}$, as subnetworks of $C$, overlap in $n-l-1$ channels. We can thus strengthen the formula $\okienka(n,l)$ by requiring the successive $C_p^q$ to overlap. Therefore, the satisfiability of the formula
\begin{eqnarray*}
\okienkaoverlap(n,d,l,T) & = & \textstyle\okienka(n,d,l,T) \wedge \bigwedge_{0 \leq p < n - l} \netoverlap(C_p^q, C_{p+1}^{q-1}), \\
\textrm{where}\qquad \netoverlap(G,H) & = & \textstyle\bigwedge_{1 \leq j \leq d, 2 \leq i < j \leq l} (g_{i,j}^k \leftrightarrow h_{i-1,j-1}^k),
\end{eqnarray*}
is also a necessary condition for the existence of the sought sorting network.

\begin{lemma}
\label{lemma:okienkaoverlap}
If $\okienkaoverlap(n,d,l,T)$ is unsatisfiable for some $n, d$ and $l \leq n$ then no comparator networks with $n$ channels and depth $d$ sorts $T$.
\end{lemma}

In general, this formula can be larger than the original formula, but empirically it improves the performance of the SAT solver (see Fig.~\ref{table:performance}).
}

\section{Experimental Results}
\label{sec:experiments}

In this section we present an experimental evaluation of the described techniques, and show how we used them to obtain bounds on $V(n)$ for $n \leq 16$. We instantiated propositional formulas encoding the existence of a sorting network for various values of $n$ and $d$ and various stages of optimisation as presented in the previous sections.\footnote{Code is available at http://www.cs.ox.ac.uk/people/daniel.bundala/networks/} We checked their satisfiability using an off-the-shelf propositional SAT solver\footnote{MiniSAT version 2.2.0} running on a standard desktop computer\footnote{Linux, CPU: 2.83GHz, Memory: 3.7GiB. All reported times are using a single CPU.}. The times taken by the SAT solver are reported in Fig.~\ref{table:performance}.

\vspace{-0.5em}
\begin{figure}
\begin{center}
{\setlength{\tabcolsep}{1mm}
\begin{tabular}{|@{~}c@{~}||@{~}c@{~}|@{~}c@{~}|@{~}c@{~}|@{~}c@{~}|@{~}c@{~}|@{~}c@{~}|@{~}c@{~}|@{~}c@{~}|@{~}c@{~}|}
\hline
$n$ & 5 & 6 & 7 & 8 & 9 & 10 & 11 & 12 & 13\\
\hline
\hline
$d$ & 4 & 4 & 5 & 5 & 6 & 6 & 7 & 7 & 8\\
\hline
SAT & 0.02s & 0.05s & 1.79s & 1.93s & 864s & 1738s & $>10^5$s & $>10^5$s & - \\
\hline
Fix-1 & 0s & 0s & 0s & 0.02s & 0.5s & 0.5s & 314s & 452s & - \\
\hline
Fix-1 + subnet & 0s & 0s & 0s & 0.01s & 0.27s & 0.26s & 112s & 143s & - \\
\hline
Fix-2 & 0s & 0s & 0.03s & 0.07s & 0.93s & 1.13s &  63s &  87s & 22h23m \\
\hline
Fix-2 + subnet & 0s & 0s & 0.02s & 0.05s & 0.77s & 0.78s & 49s & 48s & 13h1m \\
\hline
\hline
$d$ & 5 & 5 & 6 & 6 & 7 & 7 & 8 & 8 & 9\\
\hline
SAT & 0s & 0.04s & 0.13s & 1.12s & 59.7s & 949s & 1294s & $>10^5$s & - \\
\hline
Fix-1 & 0s & 0s & 0s & 0.01s & 0.20s & 3.6s & 24s & 172s & 1h40m \\
\hline
\end{tabular}
}
\end{center}
\caption{Time required by a SAT solver to solve particular instances of $n$ and $d$ using different variants of propositional formulas: the basic formula from Lemma~\ref{lem:sat} (SAT), a formula fixing the first layer to $F_n$ (Fix-1), formulas fixing the first two layers to $F_n \fatsemi S$ for each $S \in R_{po}(G_n)$ (Fix-2), and the $\okienka(n,d,m)$ versions thereof for appropriate values of $m$ (subnet). The top series corresponds to $d = V(n) - 1$, the largest depth for which no sorting network exists and the formulas are unsatisfiable, the bottom series corresponds to $d = V(n)$ and the formulas are satisfiable. A missing value indicates that the SAT solver ran out of available memory.}
\label{table:performance}
\end{figure}
\vspace{-0.5em}

Our computations confirm the known values of $V(n)$ for $n \leq 10$. Noteworthy is the case $n=9$ where we establish the nonexistence of a sorting network of depth $6$ in less than a second. The specially crafted and low-level optimised program of~\cite{Parberry} was estimated to take 200 hours on the supercomputer Cray-2. Recent work~\cite{Germans} also expressed the existence of such a network as a propositional formula, but their technique by compilation from a higher-level language yields an unnecessarily complicated formula whose SAT checking took over 16 hours.

After 5 minutes of computation when fixing the first layer (2~minutes with the subnetwork optimisation and 1 minute with fixed second layers), we found that $S(11,7)$ is empty. Since $V(11), V(12) \leq 8$ (see Fig.~\ref{fig:previousoptimal}), we have:

\begin{theorem}
The optimal depth of a sorting network with $n=11$ or $12$ channels is eight.
\end{theorem}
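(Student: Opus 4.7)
The plan is to establish the matching lower bound $V(11)\geq 8$, from which the claim for $n=12$ follows immediately. Indeed, the upper bounds $V(11)\leq 8$ and $V(12)\leq 8$ are already recorded in Fig.~\ref{fig:previousoptimal}, and since $V$ is non-decreasing (as observed in Section~\ref{sec:sorting-networks}, by deleting the last channel of a sorting network), once we know $V(11)\geq 8$ we automatically get $V(12)\geq V(11)\geq 8$. Hence the only real task is to prove that $S(11,7)=\emptyset$.

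First I would appeal to Lemma~\ref{lemma:firstfixed} to restrict attention, without loss of generality, to candidate networks whose first layer is the fixed maximal layer $F_{11}$. Next, using the symmetry and output-domination arguments of Section~\ref{sec:second-layer} (Lemmas~\ref{lemma:rotation} and \ref{lemma:symsubset}), I would further restrict the second layer to range over the set of representatives $R_{po}(G_{11})$; by Fig.~\ref{fig:nolayers} this set has only $88$ elements, so the total number of candidate two-layer prefixes we need to refute is $88$, rather than the roughly $35696^2\approx 10^9$ unrestricted pairs.

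For each such prefix $F_{11}\fatsemi S$ with $S\in R_{po}(G_{11})$, I would form the propositional encoding described in Section~\ref{section:sat}: fix the first two layers in the variables $g_{i,j}^k$, precompute the output set $T=\netoutput(F_{11}\fatsemi S)$, and build the formula $\okienka(11,7,m,T)$ from Section~\ref{section:okienka} for a suitable window size $m<11$ (necessarily satisfied by any sorting extension by Example~\ref{ex:subnet}). By Lemma~\ref{lem:sat} together with the necessary-condition argument of the subnet optimisation, if all $88$ of these formulas are unsatisfiable then no $7$-deep sorting network on $11$ channels can exist.

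The remaining step is purely computational: hand each of the $88$ formulas to an off-the-shelf SAT solver and verify that every one of them is reported UNSAT. The main obstacle is therefore not conceptual but practical, namely keeping the SAT calls tractable; this is precisely what the Fix-2 reduction combined with the subnet restriction accomplishes, and Fig.~\ref{table:performance} shows that the cumulative solver time for $n=11$, $d=7$ under ``Fix-2 + subnet'' is well under a minute. Once every formula comes back unsatisfiable, we conclude $S(11,7)=\emptyset$, hence $V(11)\geq 8$, and combining with monotonicity and the known upper bounds of Fig.~\ref{fig:previousoptimal} yields $V(11)=V(12)=8$.
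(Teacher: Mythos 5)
Your proposal is correct and matches the paper's own argument: the paper likewise reduces to showing $S(11,7)=\emptyset$ via the fixed first layer $F_{11}$, the $88$ second-layer representatives in $R_{po}(G_{11})$, and SAT checks of the resulting (subnet-optimised) formulas, then combines the computational result with the known upper bounds and monotonicity of $V$ to conclude $V(11)=V(12)=8$.
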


Note from Fig.~\ref{table:performance} that checking all Fix-2 formulas for all candidate first-two layers is already faster than checking the single Fix-1 formula; despite the drawback that the SAT solver is restarted for each different second layer. Furthermore, checking the Fix-1 formula requires much more memory, and for the case $n=13$, the SAT solver consumed all available memory (4GB) before finishing. Checking a Fix-2 formula is well within available memory, and different instances for different second layers can be distributed to different computers. This also allows us to start with a small subnetwork size in the subnetwork optimisation and increase it only in instances (second layers) where it yields a satisfiable formula.

For $n = 13$ for each of the $212$ depth-two prefixes $F_{13} \fatsemi L$ we generated a formula $\okienka(13,6,10,T)$ with subnetwork size $m = 10$ and determined that all of them are unsatisfiable in cumulative computation time of 13 hours. Hence, none of the $212$ candidate second layers can be extended to a sorting network.
\begin{theorem}
The optimal depth of a sorting network with $n=13,14,15$ or $16$ channels is nine.
\end{theorem}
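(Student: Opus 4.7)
The plan is to reduce the theorem to the single claim $V(13) \geq 9$, i.e., $S(13,8) = \emptyset$, and to establish this via the SAT encoding and combinatorial reductions developed earlier in the paper. The upper bound $V(n) \leq 9$ for $n \in \{13,14,15,16\}$ is already known (see Fig.~\ref{fig:previousoptimal}). Since $V$ is non-decreasing (shown in Section~\ref{sec:sorting-networks} by the ``remove-last-channel'' argument), the lower bound $V(13) \geq 9$ immediately gives $V(n) \geq 9$ for $n = 14, 15, 16$ as well. So the whole theorem reduces to showing that $S(13,8)$ is empty.

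To prove $S(13,8) = \emptyset$, I would proceed as follows. First, by Lemma~\ref{lemma:firstfixed} I may restrict attention to candidate sorting networks whose first layer equals the fixed maximal layer $F_{13}$. Second, by the combination of Lemmas~\ref{lemma:rotation} and~\ref{lemma:symsubset} it is enough to consider second layers drawn from the small set of representatives $R_{po}(G_{13})$; by the saturated-layer lemma together with the identity $R_{po}(G_n) = R_{po}(R(S_n))$ for odd $n$, this set has only $212$ elements, computed as described in Section~\ref{sec:second-layer}. So the question reduces to showing that for each of the $212$ prefixes $F_{13} \fatsemi L$ with $L \in R_{po}(G_{13})$, no completion to a sorting network of depth $8$ exists.

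For each such prefix I would produce a propositional formula as in Section~\ref{section:sat}, using the subnetwork-necessary-condition variant $\okienka(13, 6, m, T)$ where $T = \netoutput(F_{13} \fatsemi L)$ (here the remaining depth is $d - 2 = 6$) and $m < 13$ is a subnetwork size chosen small enough to make the SAT instance tractable. By Lemma~\ref{lem:sat} (adapted to fixed first two layers, as described at the end of Section~\ref{section:sat}) and the subnetwork argument of Section~\ref{section:okienka}, satisfiability of $\okienka(13, 6, m, T)$ is a \emph{necessary} condition for the existence of a depth-$8$ sorting network extending $F_{13}\fatsemi L$. Hence if the SAT solver certifies all $212$ of these formulas unsatisfiable, then $S(13,8) = \emptyset$ follows.

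The main obstacle is computational feasibility, not mathematical: each of the $212$ SAT calls has to terminate within reasonable time and memory. The whole approach hinges on choosing a small enough subnetwork size $m$ (empirically $m = 10$ suffices), so that the individual Fix-2 instances fit within memory and can be farmed out across machines, while still being tight enough as a necessary condition to actually be unsatisfiable. Once the $212$ UNSAT results are obtained (totalling roughly 13 CPU-hours as reported in Fig.~\ref{table:performance}), the emptiness of $S(13,8)$ and hence the theorem follow.
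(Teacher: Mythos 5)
Your proposal is correct and follows essentially the same route as the paper: reduce the theorem to $S(13,8)=\emptyset$ using the monotonicity of $V(n)$ and the known upper bounds, fix the first layer to $F_{13}$, restrict to the $212$ second-layer representatives in $R_{po}(G_{13})$, and certify unsatisfiability of $\okienka(13,6,10,T)$ for each of the resulting prefixes by a SAT solver. This matches the paper's own argument in every essential step, including the choice of subnetwork size $m=10$ and the cumulative $13$-hour computation.
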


Even though we were able to compute lower bounds for $11 \leq n \leq 16$, the case $n=17$ is beyond the scope of current techniques. We leave the depth of the optimal sorting network on $17$ channels as the main open problem of this paper.

\nop{Notice that all our reductions preserve the number of comparators used in the network; the only exception being the assumption that the first layer is full. It is therefore conceivable that our techniques can be extended to calculate the networks with minimum number of comparators.}

\nop{
\begin{figure}
\begin{center}
\begin{tabular}{|@{~}c@{~}||@{~}c@{~}|@{~}c@{~}||@{~}c@{~}|@{~}c@{~}||@{~}c@{~}|@{~}c@{~}||}
\hline
$n,d,l$ & $11,7,8$ & $11,7,9$ & $12,7,8$ & $12,7,9$ & $13,8,10$ & $13,8,11$ \\
\hline
Instances & 317 & 2 & 316 & 1 & 1069 & 8 \\
\hline
Satisfiable & 2 & 0 & 1 & 0 & 8 & 0 \\
\hline
Total Time & 618s & 8s & 684s & 3s & 25d 14h & 4h\\
\hline
\end{tabular}
\end{center}
\vspace{-1em}
\caption{Results for adaptive window sizes for $n = 11$, $12$ and $13$.}
\label{fig:case13}
\end{figure}
}

\nop{
These eight formulas correspond to the second layers shown in Fig.~\ref{fig:window11}. 

\begin{figure}
\begin{center}
\setlength{\tabcolsep}{-3pt}
\begin{tabular}{cccccccc}
\begin{sortingnetwork}{13}{0.5}
\addcomparator{1}{7}
\addcomparator{9}{10}
\addlayer
\addcomparator{3}{4}
\addcomparator{5}{11}
\addlayer
\addcomparator{2}{8}
\addlayer
\addcomparator{6}{12}
\end{sortingnetwork}

&

\begin{sortingnetwork}{13}{0.5}
\addcomparator{2}{8}
\addcomparator{10}{11}
\addlayer
\addcomparator{4}{5}
\addcomparator{6}{12}
\addlayer
\addcomparator{3}{9}
\addlayer
\addcomparator{7}{13}
\end{sortingnetwork}

&

\begin{sortingnetwork}{13}{0.5}
\addcomparator{1}{2}
\addcomparator{3}{4}
\addcomparator{5}{7}
\addcomparator{8}{9}
\addcomparator{10}{11}
\addlayer
\addcomparator{6}{12}
\end{sortingnetwork}

&

\begin{sortingnetwork}{13}{0.5}
\addcomparator{1}{2}
\addcomparator{3}{4}
\addcomparator{5}{8}
\addcomparator{10}{11}
\addlayer
\addcomparator{6}{9}
\addlayer
\addcomparator{7}{12}
\end{sortingnetwork}

&

\begin{sortingnetwork}{13}{0.5}
\addcomparator{1}{2}
\addcomparator{3}{4}
\addcomparator{5}{8}
\addcomparator{10}{11}
\addlayer
\addcomparator{6}{12}
\addlayer
\addcomparator{7}{9}
\end{sortingnetwork}

&

\begin{sortingnetwork}{13}{0.5}
\addcomparator{1}{2}
\addcomparator{3}{4}
\addcomparator{5}{9}
\addcomparator{10}{11}
\addlayer
\addcomparator{7}{8}
\addlayer
\addcomparator{6}{12}
\end{sortingnetwork}

&

\begin{sortingnetwork}{13}{0.5}
\addcomparator{2}{3}
\addcomparator{4}{5}
\addcomparator{6}{8}
\addlayer
\addcomparator{7}{13}
\addlayer
\addcomparator{9}{10}
\addcomparator{11}{12}
\end{sortingnetwork}

&

\begin{sortingnetwork}{13}{0.5}
\addcomparator{1}{2}
\addcomparator{3}{7}
\addcomparator{8}{9}
\addlayer
\addcomparator{4}{10}
\addlayer
\addcomparator{5}{11}
\addlayer
\addcomparator{6}{12}
\end{sortingnetwork}

\\
\end{tabular}
\end{center}
\vspace{-1em}
\caption{Second layers requiring window size equal to $11$.}
\label{fig:window11}
\end{figure}
}

\section*{Acknowledgments}
We would like to thank Donald E. Knuth for valuable comments on an earlier draft of this paper which led to strengthening of Lem\-ma~\ref{lemma:symsubset}, reformulation of the $\okienka$ criterion, and inclusion of the hard-wiring optimisation. He also observed that a top-to-bottom reflection of a sorting network is a sorting network, reducing the set of candidate second layers to only $118$ in the case $n=13$.

\bibliographystyle{plain}
\bibliography{bib}

\end{document}